\newcommand{\R}{\mathbb{R}}
\newcommand{\CC}{\mathbb{C}}
\newcommand{\vx}{\mathbf{x}}
\newcommand{\free}{{\rm free}}
\newcommand{\ret}{{\rm ret}}
\newcommand{\sym}{{\rm sym}}
\newcommand{\be}{\begin{equation}}
\newcommand{\ee}{\end{equation}}
\newtheorem{theorem}{Theorem}[section]
\newtheorem{proposition}[theorem]{Proposition}
\newenvironment{proof}[1][Proof:]{\begin{trivlist}
\item[\hskip \labelsep {\bfseries #1}]}{\end{trivlist}}
\newcommand{\qed}{\hfill\ensuremath{\square}}
\title{Probability conservation for multi-time integral equations}
\author{
Matthias Lienert\thanks{Marvel Fusion GmbH, Theresienh\"ohe 12, 80339 Munich, Germany. E-mail: lienertmat@gmail.com}
}
\date{November 20, 2022}
\begin{document}

\maketitle

\begin{abstract}
\noindent In relativistic quantum theory, one sometimes considers integral equations for a wave function $\psi(x_1,x_2)$ depending on two space-time points for two particles. A serious issue with such equations is that, typically, the spatial integral over $|\psi|^2$ is not conserved in time -- which conflicts with the basic probabilistic interpretation of quantum theory.
However, here it is shown that for a special class of integral equations with retarded interactions along light cones, the global probability integral is, indeed, conserved on all Cauchy surfaces. For another class of integral equations with more general interaction kernels, asymptotic probability conservation from $t=-\infty$ to $t=+\infty$ is shown to hold true. Moreover, a certain local conservation law is deduced from the first result.
\end{abstract}

\paragraph{Keywords:} multi-time wave functions, Bethe-Salpeter equation, probability conservation, relativistic quantum theory, integral equations.

\begin{center}
\emph{This paper is dedicated to the memory of Detlef D\"urr,\\
a wonderful person, scientist and mentor.}
\end{center}

\section{Introduction} \label{sec:introduction}

\subsection{Motivation} \label{sec:motivation}

An elegant but little-known approach to relativistic quantum theory involves wave functions $\psi(x_1,...,x_N)$ depending on many space-time variables $x_k \in \R^4$ for many particles $k=1,2,...,N$. In addition to other applications, such as by Detlef D\"urr and coauthors \cite{berndl,hbd} in the foundations of relativistic quantum theory, these \textit{multi-time wave functions} (see \cite{multitime_book} for an introduction) make it possible write down closed integral equations which describe a fixed number of relativistic, interacting particles in a manifestly covariant way. The best-known example is the Bethe-Salpeter (BS) equation \cite{bs_equation} which has been used to describe bound states in quantum field theory (QFT). At the time of its discovery, it was hoped that the BS equation represented a fully relativistic -- and interacting -- generalization of the Schr\"odinger equation, at least for processes where fermion creation and annihilation are not relevant (such as bound state problems).

\begin{quote}
The formulation of a completely relativistic wave
equation for two-body systems has, in a certain
sense, solved a long-standing problem of quantum
mechanics. The natural and simple way in which
relativistic invariance is achieved is, of course, very real
progress, which may lead one to hope that the main
features of the equation are more permanent than the
solidity of its present field theoretic foundation might
suggest. Furthermore, it is hardly necessary to recall
that the usefulness of the equation has been amply
demonstrated in several high-precision calculations of
energy levels. -- Wick, 1954 \cite{wick_54}
\end{quote}

However, there is a serious issue with such integral equations. Due to their non-locality
in configuration space (in the sense of PDEs), they typically do not imply (local) continuity equations, nor do they conserve the (global) probability integral. In the context of the Bethe-Salpeter equation, it has been said:

\begin{quote}
[...] The absence of a positive-definite norm for the
wave function and of any orthogonality theorem. -- Wick, 1954, listing problems of the BS equation \cite{wick_54}

Nakanishi (1965) explicitly calculated the normalization integrals in some special cases of the equal-mass Wick-Cutkosky model, and discovered that certain B-S amplitudes have negative or zero norm. -- Nakanishi, 1969 \cite{nakanishi}
\end{quote}

Of course, these quotes mean nothing else than that the quantities proposed as a norm do not actually constitute one. They cannot have the physical meaning of a probability integral. Considering that quantum physics is based on the notion of probability, this seems rather problematic for the physical justification of the Bethe-Salpeter equation.\\

The motivation for integral equations for a multi-time wave function can also be approached from a second angle -- one that was dear to Detlef D\"urr: the quantization of Wheeler-Feynman (WF) electrodynamics \cite{schwarzschild,tetrode,fokker_1,wf1,wf2}; see \cite{existence_wf,ddv_2014,bddh_2017} for some of Detlef D\"urr's works on the topic. This theory pursues the idea that the ultraviolet divergence problem of classical Maxwell-Lorenz electrodynamics can be avoided by ``integrating out'' the fields. The result is a dynamics where interactions between particles occur directly and exactly when particle world-lines are light-like separated.

The discussions with Detlef D\"urr about finding a suitable quantum version of that theory sparked my personal curiosity about the subject. After studying previous proposals for quantizations of WF electrodynamics \cite{davies_1970,hoyle_narlikar,barut_multiparticle, ludwig_50}, \cite[chap. 8]{deckert_phd}, which all encounter their own difficulties, it seemed to me that integral equations for a multi-time wave function might be a more promising way forward. In \cite{direct_interaction_quantum}, I laid out how these types of integral equations make it possible to transfer the principle of direct interactions along light cones, that forms the core of WF electrodynamics, to the quantum level. This was done in a way that retains the Dirac-Schr\"odinger equation with a spin-dependent Coulomb potential as the non-relativistic limit, thus staying close to empirically successful models. In a series of papers \cite{mtve,int_eq_curved,int_eq_dirac,int_eq_singular}, my co-authors and me were able to prove that multi-time integral equations provide a fully relativistic and interacting quantum dynamics which does not suffer from the ultraviolet divergence problem, even for singular light-cone interactions \cite{int_eq_singular}.

However, the question of probability conservation was left open and, as we have seen for the BS equation, there is reason for concern. Equations with interaction terms which are non-local in the configuration space of quantum mechanics do typically not imply local conservation laws. It is then at best unclear whether global probability conservation holds true. Historically, Feynman himself saw the problem of probability conservation as one of the central obstacles to quantizing WF electrodynamics, as he reports in his Nobel lecture:

\begin{quote}
I found that if one generalized the action from the nice Lagrangian forms [...] to these forms [...] then the quantities which I defined as energy, and so on, would be complex. The energy values of stationary states wouldn’t be real and probabilities of events wouldn’t add up to 100\%. -- Feynman, 1965 \cite{feynman_nobel_lecture}

I don’t think we have a completely satisfactory relativistic quantum-mechanical model, even one that doesn’t agree with nature, but, at least, agrees with the logic that the sum of probability of all alternatives has to be 100\%. -- Feynman, 1965 \cite{feynman_nobel_lecture}
\end{quote}

In view of these difficulties, if one is not ready to dismiss multi-time integral equations altogether, one may conclude that the equations are not exactly the right ones and that some modification is in order. From the point of view that the interaction term in the BS equation which quantum electrodynamics suggests (an infinite series of Feynman diagrams in need of renormalization) is not the most simple and natural, such a modification seems easy to accept. In addition, the argument that non-local interaction terms usually preclude local conservation laws does not apply to global conservation laws, leaving room for logical possibilities which may not have been sufficiently explored.
I am going to adopt these positions here. This makes it possible to prove that for certain classes of integral equations the global probability integral is, in fact, conserved.

\section{The integral equation} \label{sec:int_eq}

For simplicity, we focus on the case of $N=2$ Dirac particles. Moreover, we set $c = 1 = \hbar$. Then the class of integral equations we shall study reads:
\be
\psi(x_1,x_2) = \psi^{\rm free}(x_1,x_2) + i \int d^4 x_1' \, d^4 x_2' \, G_1(x_1-x_1') G_2(x_2-x_2') K(x_1',x_2') \psi(x_1',x_2').
\label{eq:inteq}
\ee
Here $\psi : \R^4 \times \R^4 \rightarrow \CC^4 \otimes \CC^4$ is a multi-time wave function with 16 complex components for two particles.\footnote{One can also study \eqref{eq:inteq} on a sub-domain of $\R^4\times \R^4$, e.g., on the set $\mathscr{S}$ of space-like configurations.}
$\psi^{\rm free}(x_1,x_2)$ is a solution of the free Dirac equation in $x_1,x_2$, i.e.:
\be
(-i \gamma_k^\mu \partial_{x_k^\mu} +m_k)\psi^{\rm free}(x_1,x_2) = 0,~~k=1,2.
\label{eq:multitimedirac}
\ee
The space-time integral in \eqref{eq:inteq} extends over $\R^4 \times \R^4$, the entire configuration space-time.
$G_1$ and $G_2$ are Green’s functions of the Dirac equations of particles 1 and 2. We use the convention of \cite[Appendix E]{birula_qed}:
\be
(-i \gamma_k^\mu \partial_{x_k^\mu} +m_k)G_k(x_k-x'_k) = \delta^{(4)}(x_k-x'_k).
\ee
Here and in the following, particle indices in $\gamma$-matrices, Green’s functions and propagators indicate on which spin index their matrix structure acts.

$K(x_1,x_2)$ is the so-called \emph{interaction} kernel, a covariant, matrix-valued distribution. We require the following symmetry condition with respect to its matrix structure:
\be
K^\dagger(x_1,x_2) = \gamma_1^0 \gamma_2^0 K(x_1,x_2) \gamma_1^0 \gamma_2^0.
\label{eq:symmetryk}
\ee
As explained in \cite{direct_interaction_quantum}, direct interactions along light cones in the spirit of Wheeler-Feynman electromagnetism can be expressed by the interaction kernel
\be
K^\sym(x_1,x_2) = \lambda \, \gamma_1^\mu \gamma_{2,\mu}\, \delta((x_1-x_2)^2)
\label{eq:lightconeint}
\ee
where $\lambda \in \R$ is a coupling constant and $(x_1-x_2)^2 = (x_1^0 - x_2^0)^2 - |\vx_1-\vx_2|^2$ denotes the Minkowski square. Note that \eqref{eq:lightconeint} contains both retarded and advanced interaction terms, as can be seen by decomposing the delta distribution. The retarded part is given by:
\be
K^\ret(x_1,x_2) = \lambda \, \gamma_1^\mu \gamma_{2,\mu}\, \frac{1}{2 |\vx_1-\vx_2|} \delta(x_1^0 - x_2^0 - |\vx_1-\vx_2|).
\label{eq:retardedint}
\ee

With these conventions, the factor $i$ in the interaction term in \eqref{eq:inteq} is required to obtain the correct non-retarded limit, i.e., a Schr\"odinger equation with spin-dependent Coulomb potential\cite{direct_interaction_quantum}.

\paragraph{Relation to the Bethe-Salpeter equation.}
The BS equation is contained in the class of equations \eqref{eq:inteq} for the case that $G_k$ are Feynman propagators $S^F_k$ for the two particles $k=1,2$, and for the case that $K(x_1,x_2)$ is given by an infinite series of Feynman diagrams. In the so-called ladder-approximation of the BS equation, only a certain sub-class of these Feynman diagrams (consisting of those exchanging only one virtual photon at a time) is considered. Then $K$ simplifies to
\be
K^\text{BSL}(x_1,x_2) = \lambda \, \gamma_1^\mu \gamma_2^\nu D^F_{\mu \nu}(x_1,x_2)
\label{eq:bslinteraction}
\ee
where $D^F_{\mu \nu}$ is the Feynman propagator of a photon (see \cite[p. 331]{greiner_qed}). In Lorenz gauge:
\be
D^F_{\mu \nu}(x_1,x_2) = \eta_{\mu \nu} \, D^F(x_1,x_2)
\label{eq:df}
\ee
where $D^F$ is the Feynman propagator of the wave eq. and $\eta_{\mu \nu}$ the Minkowski metric. As both $D^F$ and $\frac{1}{4\pi}\delta((x_1-x_2)^2)$ are Green’s functions of the wave equation, \eqref{eq:bslinteraction} closely resembles \eqref{eq:lightconeint}. However, a crucial difference is that only \eqref{eq:lightconeint} is supported on the light cone; \eqref{eq:bslinteraction} also has support outside.

\paragraph{Role of the dynamics.}
As discussed in \cite{int_eq_dirac}, one can best understand the dynamics defined by \eqref{eq:inteq} in the case of $G_k = S_k^\ret$, the retarded Green’s function of the Dirac equation for particle $k$.
Then, for each incoming free wave function $\psi^\free$, the integral equation defines a unique interacting solution $\psi$ which agrees with $\psi^\free$ in the infinite past. Thus Eq. \eqref{eq:inteq} can be viewed as a machinery which takes an incoming free solution and computes an interacting correction to it.

\paragraph{Notes on retarded Green’s functions.}
We now collect useful properties of retarded Green’s functions which are rooted in their simple relation to the propagator of the Dirac equation. These will play a crucial role in the upcoming arguments. Namely, we have:
\be
S^\ret(x-x’) = \theta(x^0-{x’}^0)S(x-x’)
\label{eq:sret}
\ee
where $\theta$ is the Heaviside function and $S$ the propagator of the Dirac equation. $S$ can be used to time-evolve every free solution of the Dirac equation from one Cauchy surface $\Sigma$ to another:
\be
\psi^\free(x) = -i \int_\Sigma d \sigma_\mu(x') \, S(x-x') \gamma^\mu \psi^\free(x').
\label{eq:freeevol}
\ee
Confusingly, both $S^\ret$ and $S^F$ are called propagators, even though they do not have the property \eqref{eq:freeevol} for all wave functions and all Cauchy surfaces.
From \eqref{eq:freeevol}, one can deduce the composition property
\be
\int_\Sigma d \sigma_\mu(x') \, S(x-x')\gamma^\mu S(x'-x'') = i S(x-x'').
\label{eq:compositionproperty}
\ee
Moreover, we have $[S(x-x')]^\dagger = -\gamma^0 S(x'-x) \gamma^0$.  
This allows us to compute the adjoint of the integral equation \eqref{eq:inteq}, denoting $\overline{\psi}(x_1,x_2) = \psi^\dagger(x_1,x_2) \gamma_1^0 \gamma_2^0$:
\begin{align}
\overline{\psi}(x_1,x_2) = \overline{\psi}^{\rm free}(x_1,x_2) - i \int d^4 x_1' \, d^4 x_2' \, \overline{\psi}(x_1',x_2') &K(x_2',x_1') S_1(x_1'-x_1) S_2(x_2'-x_2)\nonumber\\
&\times \theta(x_1^0 - {x_1'}^0) \theta(x_2^0 - {x_2'}^0).
\label{eq:inteq2}
\end{align}

\section{Relativistic probability conservation} \label{sec:prob_cons}

For Dirac particles, local probability conservation is expressed by the continuity equation
$\partial_{x^\mu} j^\mu(x) = 0$ where $j^\mu= \overline{\psi}(x) \gamma^\mu \psi(x)$ denotes the probability current and $\overline{\psi}(x)= \psi^\dagger(x) \gamma^0$. Global probability conservation means that $\int_\Sigma d\sigma_\mu(x) \, j^\mu(x)$ does not depend on the choice of Cauchy surface $\Sigma \subset \R^4$.

For a multi-time wave function for $N$ Dirac particles, these notions can be generalized as follows. Local probability conservation can be expressed by a set of $N$ continuity equations,
\be
\partial_{x_k^\mu} \, j^{\mu_1 ... \mu_N}(x_1,...,x_N) = 0,~~k=1,2,...,N
\label{eq:continuity}
\ee
where $j^{\mu_1 ... \mu_N} = \overline{\psi} \gamma_1^{\mu_1} \cdots \gamma_N^{\mu_N} \psi$ denotes the Dirac tensor current.

Eqs. \eqref{eq:continuity} make it possible formulate a generalized version of the Born rule for all Cauchy surfaces. Let $n$ be the future-directed unit normal vector field at $\Sigma$. Then
\be
\rho(x_1,...,x_N) = \overline{\psi}(x_1,...,x_N) \gamma_1^{\mu_1} \cdots \gamma_N^{\mu_N} \psi(x_1,...,x_N) n_{\mu_1}(x_1) \cdots n_{\mu_N}(x_N)
\ee
defines the probability density for $N$ particles $k=1,...,N$ to cross $\Sigma$ at the locations $x_1,...,x_N \in \Sigma$. In fact, for theories with local interactions and finite propagation speed, it is possible to prove this rule using the usual Born rule in a distinguished frame \cite{generalized_born}.

The continuity equations \eqref{eq:continuity} imply global probability conservation in the sense that
\be
P(\psi, \Sigma) = \int_{\Sigma^N} d \sigma_{\mu_1}(x_1) \cdots d \sigma_{\mu_N}(x_N) \, \overline{\psi}(x_1,...,x_N) \gamma_1^{\mu_1} \cdots \gamma_N^{\mu_N} \psi(x_1,...,x_N)
\label{eq:probint}
\ee
does not depend on the choice of Cauchy surface $\Sigma$. In fact, this requires \eqref{eq:continuity} only on the set of space-like configurations $\mathscr{S} \subset \R^{4N}$, not necessarily on the entire configuration-spacetime $\R^{4N}$.
However, in the case that \eqref{eq:continuity} hold true on $\R^{4N}$, one finds that the generalized probability integral
\be
P(\psi, \Sigma_1, ..., \Sigma_N) = \int_{\Sigma_1 \times \cdots \times \Sigma_N} d \sigma_{\mu_1}(x_1) \cdots d \sigma_{\mu_N}(x_N) \, \overline{\psi}(x_1,...,x_N) \gamma_1^{\mu_1} \cdots \gamma_N^{\mu_N} \psi(x_1,...,x_N)
\label{eq:probint2}
\ee
is independent of the choice of $N$ (potentially different) Cauchy surfaces $\Sigma_1, ..., \Sigma_N$.

While $P(\psi,\Sigma)$ seems like the physically appropriate choice\footnote{The reason for this is that space-like configurations are the natural generalization of equal-time configurations. Non-space-like configurations can arise from multiple points on a single time-like (or light-like) world line. Thus, there is no physical reason to expect probability conservation on such configurations.} for the probability integral, we shall consider $P(\psi,\Sigma_1,...,\Sigma_N)$ as the more general notion in the following. This simplifies to investigate which local conservation laws follow from the global ones (see Sec. \ref{sec:results_local}).

\section{Results} \label{sec:results}

\subsection{Probability conservation on all Cauchy surfaces for retarded Green’s functions and retarded interaction kernels} \label{sec:results_ret}

As \eqref{eq:inteq} is an integral equation with a non-local interaction term on configuration space, we do not expect it to imply local conservation laws. This means that a method to prove global probability conservation without first establishing local probability conservation is required. Conveniently, the propagator $S$ of the Dirac equation provides such a method.

To see this, let $\psi^\free(x)$ be a solution of the Dirac eq. and $\Sigma, \Sigma'$ Cauchy surfaces. Then:
\begin{align}
P(\psi^\free,\Sigma) &= \int_\Sigma d\sigma_\mu(x) ~ \overline{\psi}^\free(x) \gamma^\mu \psi^\free(x)\nonumber\\
&\stackrel{\eqref{eq:freeevol}}{=} -i \int_\Sigma d\sigma_\mu(x) ~ \overline{\psi}^\free(x) \gamma^\mu \int_{\Sigma'} d \sigma_\nu(x') S(x-x')\gamma^\nu \psi^\free(x')\nonumber\\
&= \int_{\Sigma'} d \sigma_\nu(x') \left( -i \int_\Sigma d\sigma_\mu(x) \, \overline{\psi}^\free(x) \gamma^\mu S(x-x') \right) \gamma^\nu \psi^\free(x')\nonumber\\
&\stackrel{\eqref{eq:freeevol}}{=} \int_{\Sigma'} d \sigma_\nu(x') ~ \overline{\psi}^\free(x') \gamma^\nu \psi^\free(x') = P(\psi^\free,\Sigma').
\end{align}

Using the relation of retarded Green’s functions to the propagator $S$, we now prove our result.

\begin{proposition} \label{prop:result_ret}
Consider the integral equation \eqref{eq:inteq} with retarded Green’s functions, $G_k = S_k^\ret,~k=1,2$ \eqref{eq:sret}, and retarded interaction kernel \eqref{eq:retardedint}. Then for every solution $\psi$ of \eqref{eq:inteq} on $\R^4\times \R^4$, the probability integral $P(\psi,\Sigma_1,\Sigma_2)$ \eqref{eq:probint2} does not depend on the choice of Cauchy surfaces $\Sigma_1, \Sigma_2 \subset \R^4$.
\end{proposition}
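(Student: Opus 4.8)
The plan is to mimic the free-case computation shown just above the proposition, but now carrying along the full interacting wave function and its adjoint through the integral equation. The key structural fact is that, for retarded Green's functions, $S_k^\ret(x_k-x_k') = \theta(x_k^0-{x_k'}^0)S_k(x_k-x_k')$, so the space-time integrals in \eqref{eq:inteq} and \eqref{eq:inteq2} are effectively cut off in time; combined with the retarded interaction kernel \eqref{eq:retardedint}, which enforces $x_1'^0 - x_2'^0 = |\vx_1' - \vx_2'| \geq 0$, this should let me exploit the composition property \eqref{eq:compositionproperty} of the propagator $S$ surface-by-surface. Concretely, I would start from $P(\psi,\Sigma_1,\Sigma_2) = \int_{\Sigma_1\times\Sigma_2} d\sigma_{\mu_1}(x_1)\, d\sigma_{\mu_2}(x_2)\, \overline\psi(x_1,x_2)\gamma_1^{\mu_1}\gamma_2^{\mu_2}\psi(x_1,x_2)$ and insert \eqref{eq:inteq} for $\psi$ in the right slot and \eqref{eq:inteq2} for $\overline\psi$ in the left slot, so that $P$ becomes a sum of four terms: free–free, free–interacting, interacting–free, and interacting–interacting.

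The free–free term is constant in $\Sigma_1,\Sigma_2$ by the computation already given in the excerpt (applied factor-wise in each variable, since $\psi^\free$ is a product-type evolution under $S_1$ and $S_2$ separately). For the cross terms and the quadratic term, the strategy is to use \eqref{eq:freeevol} to rewrite each free factor as a surface integral against $S$ on an \emph{arbitrary} auxiliary Cauchy surface, and then to use the identity $-i\int_\Sigma d\sigma_\mu(x)\,\overline\phi(x)\gamma^\mu S(x-x') \gamma^\nu\cdots$ together with $[S(x-x')]^\dagger = -\gamma^0 S(x'-x)\gamma^0$ to ``slide'' the $\Sigma_k$-surface integral through the retarded Green's function $S_k^\ret$ appearing in the interaction term. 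The crucial point is that the $\theta$-function in $S_k^\ret(x_k - x_k') = \theta(x_k^0 - {x_k'}^0) S_k(x_k - x_k')$ interacts with the surface integral over $\Sigma_k$: when one pushes $\Sigma_k$ to the far future, $\theta(x_k^0 - {x_k'}^0) \to 1$ on the support of the (retarded) integrand, so $S_k^\ret$ acts exactly like $S_k$ and the composition property \eqref{eq:compositionproperty} collapses the $x_k'$ integral; when one pushes $\Sigma_k$ to the far past, $\theta \to 0$ and the term drops out. Either way the dependence on $\Sigma_k$ disappears, and one wants to argue that the net $\Sigma_k$-dependence of $P$ therefore vanishes for \emph{every} choice of $\Sigma_k$, not just the extremes — this is where one invokes that $P(\psi,\Sigma_1,\Sigma_2)$, as a function of each $\Sigma_k$, is shown to be independent of deformations (e.g. by the above sliding argument carried out for a generic intermediate surface, the $\theta$ splitting each interaction integral into a piece that recombines via \eqref{eq:compositionproperty} and a piece that cancels against the corresponding term coming from the other copy of the integral equation).

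I expect the main obstacle to be bookkeeping: organizing the four terms so that the contributions of the $\theta$-functions from $S_1^\ret, S_2^\ret$ in \eqref{eq:inteq} and \eqref{eq:inteq2} pair up correctly, and checking that the retarded kernel's support condition $x_1'^0 - x_2'^0 = |\vx_1'-\vx_2'|$ is compatible with the time-ordering constraints imposed by all four $\theta$-factors simultaneously (so that, e.g., no spurious boundary terms at the ``edges'' of the retarded light cone appear). A secondary technical point is justifying the interchange of the (generally noncompact) $\Sigma_k$-integrals with the $d^4x_k'$-integrals in the interaction term; I would handle this under the standing regularity/decay assumptions on $\psi^\free$ and on the kernel under which \eqref{eq:inteq} was shown to be well-posed in the cited earlier work, possibly first on nice surfaces (equal-time hyperplanes) and then extending to arbitrary Cauchy surfaces by the deformation argument. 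The matrix symmetry condition \eqref{eq:symmetryk} and the identity $[S(x-x')]^\dagger = -\gamma^0 S(x'-x)\gamma^0$ are exactly what make the adjoint equation \eqref{eq:inteq2} have the right form for these cancellations to be matrix-consistent, so I would keep careful track of the $\gamma_k^0$ conjugations throughout.
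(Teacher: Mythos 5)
Your opening moves match the paper's: insert \eqref{eq:inteq} and \eqref{eq:inteq2} into $P(\psi,\Sigma_1,\Sigma_2)$, isolate the free--free term, and try to collapse the surface integrals in the remaining terms via \eqref{eq:freeevol} and \eqref{eq:compositionproperty}. But there are two genuine gaps. First, to actually apply those identities you must deal with the fact that $\theta(x_k^0-{x_k'}^0)$ depends on the integration point $x_k\in\Sigma_k$; the paper's resolution is the observation \eqref{eq:sret2} that, on the support of $S$, one may replace $\theta(x_k^0-{x_k'}^0)$ by $\theta_{\Sigma_k}(x_k')$, which is constant along $\Sigma_k$ and can be pulled out of the surface integral. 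Your proposed substitute --- pushing $\Sigma_k$ to the far future or far past so that $\theta\to 1$ or $\theta\to 0$ --- only controls the two extreme surfaces and cannot by itself establish independence for an arbitrary intermediate $\Sigma_k$; the ``deformation argument'' you gesture at is precisely what needs to be supplied, and it is supplied by the $\theta_{\Sigma_k}$ replacement, not by a limit.

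Second, and more seriously, you treat the compatibility of the retarded kernel's support with the time-ordering $\theta$-factors as ``bookkeeping.'' It is the mathematical heart of the proof. After the surface integrals are collapsed, the cross terms and the quadratic term cancel \emph{only if} the residual identity
\begin{equation*}
1 = \theta({x_1'}^0-{x_1''}^0)\,\theta({x_2'}^0-{x_2''}^0)+\theta({x_1''}^0-{x_1'}^0)\,\theta({x_2''}^0-{x_2'}^0)
\end{equation*}
holds on the support of $K(x_1',x_2')\,S_1(x_1'-x_1'')\,S_2(x_2'-x_2'')\,K(x_1'',x_2'')$. This identity is \emph{false} for generic configurations (particle 1's times may be ordered one way and particle 2's the other), and indeed for a general symmetric kernel the paper only obtains asymptotic conservation (Prop.~\ref{prop:result_sym}). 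For $K=K^\ret$ one must prove, via the light-cone support conditions of $K^\ret$ and of $S_1,S_2$ together with a twofold application of the triangle inequality, that ${x_1'}^0>{x_1''}^0$ forces ${x_2'}^0>{x_2''}^0$ and vice versa, so that exactly one of the two products of $\theta$'s equals $1$. Your proposal contains no such argument, and without it the cancellation simply does not go through. (A smaller omission: in the cross term one must re-express $\psi^\free$ through the integral equation ``backwards'' and use the symmetry \eqref{eq:symmetryk} to see that the term $2\Im\int\overline{\psi}K\theta_{\Sigma_1}\theta_{\Sigma_2}\psi$ is the imaginary part of a real quantity and hence vanishes; you invoke \eqref{eq:symmetryk} only for the form of the adjoint equation.)
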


\begin{proof}
Let $\psi^\free$ a solution of the free multi-time Dirac equations \eqref{eq:multitimedirac} and $\psi$ a solution of the integral equation \eqref{eq:inteq}. Our strategy is to decompose $P(\psi,\Sigma_1,\Sigma_2)$ as
\be
P(\psi,\Sigma_1,\Sigma_2) = P(\psi^\free,\Sigma_1,\Sigma_2) + P_1(\psi,\Sigma_1,\Sigma_2)
\ee
and to show that $P_1$ vanishes for the retarded interaction kernel \eqref{eq:retardedint} and all Cauchy surfaces $\Sigma_1,\Sigma_2$. We already know that the free Dirac equations \eqref{eq:multitimedirac} imply the continuity equations \eqref{eq:continuity}. Thus:
\be
P(\psi^\free,\Sigma_1,\Sigma_2) = P(\psi^\free,\Sigma_3,\Sigma_4)
\ee
for all Cauchy surfaces $\Sigma_1,\Sigma_2,\Sigma_3,\Sigma_4$, this allows us to deduce
\be
P(\psi,\Sigma_1,\Sigma_2) = P(\psi^\free,\Sigma_1,\Sigma_2) = P(\psi^\free,\Sigma_3,\Sigma_4) = P(\psi,\Sigma_3,\Sigma_4)
\ee
which is the claim.
The main work is to prove that $P_1(\psi,\Sigma_1,\Sigma_2)$ vanishes. Plugging the right hand side of Eq. \eqref{eq:inteq} (for $\psi$) and its adjoint \eqref{eq:inteq2} (for $\overline{\psi}$) into $P(\psi,\Sigma_1,\Sigma_2)$, we find, considering \eqref{eq:symmetryk}: 
\begin{align}
&P_1(\psi,\Sigma_1,\Sigma_2) = P(\psi,\Sigma_1,\Sigma_2) - P(\psi^\free,\Sigma_1,\Sigma_2) = \nonumber \\
&2 \Im \int_{\Sigma_1 \times \Sigma_2} d\sigma_\mu(x_1) \, d\sigma_\nu(x_2) \int d^4 x_1' \, d^4 x_2'~ \overline{\psi}(x_1',x_2') K(x_1',x_2') S_1(x_1'-x_1) S_2(x_2'-x_2) \nonumber\\
&~~~\times \theta(x_1^0 - {x_1'}^0) \theta(x_2^0 - {x_2'}^0) \gamma_1^\mu \gamma_2^\nu \, \psi^\free(x_1,x_2)\nonumber\\
&+ \int_{\Sigma_1 \times \Sigma_2} d\sigma_\mu(x_1) \, d\sigma_\nu(x_2) \int d^4 x_1' \, d^4 x_2'~ \overline{\psi}(x_1',x_2') K(x_1',x_2') S_1(x_1'-x_1) S_2(x_2'-x_2)\nonumber\\
&~~~\times \theta(x_1^0 - {x_1'}^0) \theta(x_2^0 - {x_2'}^0) \gamma_1^\mu \gamma_2^\nu \int d^4 x_1'' \, d^4 x_2'' \, S_1(x_1-x_1'') S_2(x_2-x_2'') \nonumber\\
&~~~\times \theta(x_1^0 - {x_1''}^0) \theta(x_2^0 - {x_2''}^0) K(x_1'',x_2'') \psi(x_1'',x_2'')\nonumber\\
&=: P_{1,1}(\psi,\Sigma_1,\Sigma_2) + P_{1,2}(\psi,\Sigma_1,\Sigma_2)
\label{eq:p1}
\end{align}
with $P_{1,1}$ and $P_{1,2}$ defined as the two summands of the equation in the order of appearance.

It is crucial that due to the simple relation \eqref{eq:sret} of $S^\ret$ with the propagator $S$, the propagators $S_1$ and $S_2$ appear in the equation which, together with the hypersurface integrals $\int_{\Sigma_1 \times \Sigma_2} d\sigma_\mu(x_1) \, d\sigma_\nu(x_2)$, can be used to evolve $\psi^\free$. In the first term $P_{1,1}$, we would like to use \eqref{eq:freeevol} in both $x_1$ and $x_2$. On first glance, this does not seem possible because $\theta(x_1^0 - {x_1'}^0) \theta(x_2^0 - {x_2'}^0)$ depends on the time variables $x_1^0$ and $x_2^0$ of the Cauchy surfaces $\Sigma_1$ and $\Sigma_2$, respectively. However, since the propagator $S$ of the Dirac equation has only support inside of and on the light cone, we can write:
\be
S(x'-x) \theta(x^0 - {x'}^0) = S(x'-x) \theta_\Sigma(x')~~~\text{and}~~~S(x-x') \theta(x^0 - {x'}^0) = S(x-x') \theta_\Sigma(x')
\label{eq:sret2}
\ee
where $\Sigma$ is a Cauchy surface that contains $x$ and
\be
\theta_\Sigma(x') = \begin{cases} 1 & \text{if } x' \in \text{past}(\Sigma)\\ 0 & \text{else}. \end{cases}
\ee
Here, $\text{past}(\Sigma) = \bigcup_{x\in\Sigma} \text{past}(x)$ denotes the part of space-time "below $\Sigma$". An important point is that $\theta_\Sigma(x')$ does not depend of $x$ (as long as $x \in \Sigma$). Using \eqref{eq:sret2}, we obtain:
\begin{align}
P_{1,1}(\psi,\Sigma_1,\Sigma_2) &= 2 \Im \int_{\Sigma_1 \times \Sigma_2} d\sigma_\mu(x_1) \, d\sigma_\nu(x_2) \int d^4 x_1' \, d^4 x_2'~ \overline{\psi}(x_1',x_2') K(x_1',x_2') \nonumber\\
&~~~\times S_1(x_1'-x_1) S_2(x_2'-x_2) \theta_{\Sigma_1}(x_1') \theta_{\Sigma_2}(x_2') \gamma_1^\mu \gamma_2^\nu \, \psi^\free(x_1,x_2).
\end{align}
It is now possible to exchange the integrals, yielding
\begin{align}
&P_{1,1}(\psi,\Sigma_1,\Sigma_2) = 2 \Im \int d^4 x_1' \, d^4 x_2'~ \overline{\psi}(x_1',x_2') K(x_1',x_2') \theta_{\Sigma_1}(x_1') \theta_{\Sigma_2}(x_2') \nonumber\\
&~~~~~\times \int_{\Sigma_1 \times \Sigma_2} d\sigma_\mu(x_1) \, d\sigma_\nu(x_2) \, S_1(x_1'-x_1) S_2(x_2'-x_2) \gamma_1^\mu \gamma_2^\nu \, \psi^\free(x_1,x_2).
\end{align}
This allows us to employ the propagation identity \eqref{eq:freeevol} twice to deduce:
\be
P_{1,1}(\psi,\Sigma_1,\Sigma_2) = - 2 \Im \int d^4 x_1' \, d^4 x_2'~ \overline{\psi}(x_1',x_2') K(x_1',x_2') \theta_{\Sigma_1}(x_1') \theta_{\Sigma_2}(x_2') \psi^\free(x_1',x_2').
\ee
Now we use the integral equation \eqref{eq:inteq} "backwards" to express $\psi^\free$ in terms of $\psi$:
\begin{align}
&P_{1,1}(\psi,\Sigma_1,\Sigma_2) =
- 2 \Im \int d^4 x_1' \, d^4 x_2'~ \overline{\psi}(x_1',x_2') K(x_1',x_2') \theta_{\Sigma_1}(x_1') \theta_{\Sigma_2}(x_2') \psi(x_1',x_2')\nonumber\\
&+ 2 \Im \, i \int d^4 x_1' \, d^4 x_2' \, \int d^4 x_1'' \, d^4 x_2''~ \overline{\psi}(x_1',x_2') K(x_1',x_2') \theta_{\Sigma_1}(x_1') \theta_{\Sigma_2}(x_2') \nonumber\\
&~~~\times \theta({x_1'}^0 - {x_1''}^0)\theta({x_2'}^0 - {x_2''}^0) S_1(x_1'-x_1'') S_2(x_2'-x_2'') K(x_1'',x_2'') \psi (x_1'',x_2'')\nonumber\\
&= 0 + 2 \Re \int d^4 x_1' \, d^4 x_2' \, \int d^4 x_1'' \, d^4 x_2''~ \overline{\psi}(x_1',x_2') K(x_1',x_2') \theta_{\Sigma_1}(x_1') \theta_{\Sigma_2}(x_2') \nonumber\\
&~~~\times \theta({x_1'}^0 - {x_1''}^0)\theta({x_2'}^0 - {x_2''}^0) S_1(x_1'-x_1'') S_2(x_2'-x_2'') K(x_1'',x_2'') \psi (x_1'',x_2'').
\end{align}
In order to conclude that the first term vanishes, we have used the symmetry of $K$ \eqref{eq:symmetryk}. Now we employ the identity $2\Re z = z + z^*$:
\begin{align}
& P_{1,1}(\psi,\Sigma_1,\Sigma_2) =\label{eq:p11}\\
&\int d^4 x_1' \, d^4 x_2' \, d^4 x_1'' \, d^4 x_2'' ~ \overline{\psi}(x_1',x_2') K(x_1',x_2') S_1(x_1'-x_1'')S_2(x_2'-x_2'') K(x_1'',x_2'') \psi(x_1'',x_2'')\nonumber\\
& \times \left[ \theta_{\Sigma_1}(x_1') \theta_{\Sigma_2}(x_2') \theta({x_1'}^0 - {x_1''}^0) \theta({x_2'}^0 - {x_2''}^0) + \theta_{\Sigma_1}(x_1'') \theta_{\Sigma_2}(x_2'') \theta({x_1''}^0 - {x_1'}^0) \theta({x_2''}^0 - {x_2'}^0)\right]
\nonumber
\end{align}
We compare this term to $P_{1,2}(\psi,\Sigma_1,\Sigma_2)$. Using \eqref{eq:sret2} and exchanging the order of the integrals, we obtain:
\begin{align}
&P_{1,2}(\psi,\Sigma_1,\Sigma_2) = \int d^4 x_1' \, d^4 x_2' \, d^4 x_1'' \, d^4 x_2''~ \overline{\psi}(x_1',x_2') K(x_1',x_2') \nonumber\\
&~~~\times \int_{\Sigma_1 \times \Sigma_2} d\sigma_\mu(x_1) \, d\sigma_\nu(x_2) S_1(x_1'-x_1) S_2(x_2'-x_2) \gamma_1^\mu \gamma_2^\nu S_1(x_1-x_1'') S_2(x_2-x_2'') \nonumber\\
&~~~\times \theta_{\Sigma_1}(x_1') \theta_{\Sigma_2}(x_2') \theta_{\Sigma_1}(x_1'') \theta_{\Sigma_2}(x_2'') K(x_1'',x_2'') \psi(x_1'',x_2'').
\end{align}
This allows us to utilize the composition property \eqref{eq:compositionproperty} for the propagators twice, yielding
\begin{align}
P_{1,2}(\psi,\Sigma_1,\Sigma_2) &= - \int d^4 x_1' \, d^4 x_2' \, d^4 x_1'' \, d^4 x_2''~ \overline{\psi}(x_1',x_2') K(x_1',x_2') S_1(x_1'-x_1'') S_2(x_2'-x_2'') \nonumber\\
&~~~\times K(x_1'',x_2'') \psi(x_1'',x_2'') \theta_{\Sigma_1}(x_1') \theta_{\Sigma_2}(x_2') \theta_{\Sigma_1}(x_1'') \theta_{\Sigma_2}(x_2'').
\label{eq:p12}
\end{align}
Comparing \eqref{eq:p11} and \eqref{eq:p12}, we find that $P_{1,1}$ and $P_{1,2}$  cancel if:
\begin{align}
&\theta_{\Sigma_1}(x_1') \theta_{\Sigma_2}(x_2') \theta_{\Sigma_1}(x_1'') \theta_{\Sigma_2}(x_2'') ~\stackrel{!}{=} \label{eq:criticalcond0}\\
&\theta_{\Sigma_1}(x_1') \theta_{\Sigma_2}(x_2') \theta({x_1'}^0 - {x_1''}^0) \theta({x_2'}^0 - {x_2''}^0)
+ \theta_{\Sigma_1}(x_1'') \theta_{\Sigma_2}(x_2'') \theta({x_1''}^0 - {x_1'}^0) \theta({x_2''}^0 - {x_2'}^0).\nonumber
\end{align}
To make further progress, note that
\be
\theta_{\Sigma_k}(x_k') \theta({x_k'}^0 - {x_k''}^0) = \theta_{\Sigma_k}(x_k') \theta_{\Sigma_k}(x_k'') \theta({x_k'}^0 - {x_k''}^0)
\ee
as $x_k'' \in \text{past}(x_k')$ and $x_k' \in \text{past}(\Sigma_k)$ together imply that $x_k'' \in \text{past}(\Sigma_k)$. Thus, every term in \eqref{eq:criticalcond0} contains the factor $\theta_{\Sigma_1}(x_1') \theta_{\Sigma_2}(x_2')\theta_{\Sigma_1}(x_1'') \theta_{\Sigma_2}(x_2'')$ and \eqref{eq:criticalcond0} reduces to:
\be
1 ~\stackrel{!}{=}~ \theta({x_1'}^0 - {x_1''}^0) \theta({x_2'}^0 - {x_2''}^0) + \theta({x_1''}^0 - {x_1'}^0) \theta({x_2''}^0 - {x_2'}^0).
\label{eq:criticalcond1}
\ee
In general, this condition does not hold. However, we have not used the special structure of the retarded interaction kernel $K^\ret$ \eqref{eq:retardedint} yet. Instead of \eqref{eq:criticalcond1}, we will show that
\begin{align}
&K^\ret(x_1',x_2') S_1(x_1'-x_1'') S_2(x_2'-x_2'') K^\ret(x_1'',x_2'')\nonumber\\
&\times \left[ 1 - \theta({x_1'}^0 - {x_1''}^0) \theta({x_2'}^0 - {x_2''}^0) - \theta({x_1''}^0 - {x_1'}^0) \theta({x_2''}^0 - {x_2'}^0)\right] = 0.
\label{eq:criticalcond}
\end{align}
In fact, including the first line adds several geometric conditions on those tuples $(x_1',x_2',x_1'',x_2'')$ which may actually contribute to $P_{1,1}(\psi,\Sigma_1,\Sigma_2) + P_{1,2}(\psi,\Sigma_1,\Sigma_2)$.
\begin{enumerate}
\item[(i)] ${x_2'}^0 = {x_1'}^0 - |\vx_1' - \vx_2'|$ (because of $K^\ret(x_1',x_2')$),
\item[(ii)] ${x_2''}^0 = {x_1''}^0 - |\vx_1'' - \vx_2''|$ (because of $K^\ret(x_1'',x_2'')$),
\item[(iii)] $|{x_1'}^0 - {x_1''}^0| \geq |\vx_1' - \vx_1''|$ (because of the support of $S_1$),
\item[(iv)] $|{x_2'}^0 - {x_2''}^0| \geq |\vx_2' - \vx_2''|$ (because of the support of $S_2$).
\end{enumerate}
We now show that the above conditions only allow for the following two cases:
\begin{center}
	1. ${x_k'}^0 > {x_k''}^0$ for $k=1,2$  ~~~or~~~ 2. ${x_k''}^0 > {x_k'}^0$ for $k=1,2$.
\end{center}
The logic behind this claim is that if true, $K^\ret(x_1',x_2') S_1(x_1'-x_1'') S_2(x_2'-x_2'') K^\ret(x_1'',x_2'') \neq 0$ implies $\theta({x_1'}^0 - {x_1''}^0) \theta({x_2'}^0 - {x_2''}^0) + \theta({x_1''}^0 - {x_1'}^0) \theta({x_2''}^0 - {x_2'}^0) = 1$. This, in turn, means that the square bracket in \eqref{eq:criticalcond} vanishes. Hence, \eqref{eq:criticalcond} is always satisfied, implying $P_1(\psi,\Sigma_1,\Sigma_2) = P_{1,1}(\psi,\Sigma_1,\Sigma_2) + P_{1,2}(\psi,\Sigma_1,\Sigma_2) = 0$, and thus probability conservation.

\begin{figure}
\centering
\includegraphics[width=0.21\textwidth]{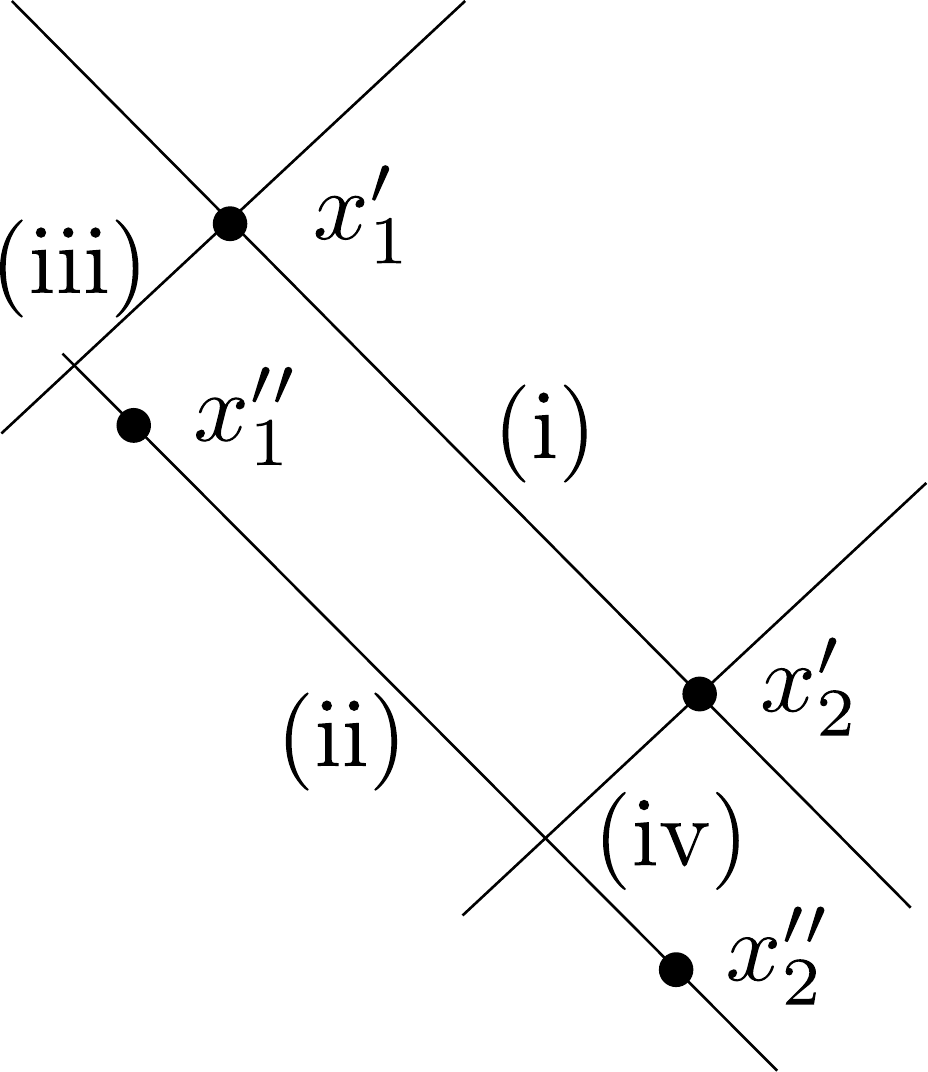}
\caption{Illustration of the proof of \eqref{eq:criticalcond}. Begin with $x_1'$. By (i), $x_2'$ has to lie on the past light cone of $x_1'$. Next, consider $x_1''$. Assume that ${x_1'}^0 >{x_1''}^0$. By (iii), $x_1''$ has to be in the past of $x_1'$. The final point $x_2''$ has to lie on the past light cone of $x_1''$ by (ii). Moreover, by (iv), $x_2''$ has to be light-like or time-like to $x_2'$. As one can see, there is no other option for $x_2''$ than to lie in the past of $x_2'$, which implies ${x_2'}^0 > {x_2''}^0$.} 
\label{fig:light_cone_construction}
\end{figure}

We begin with the first case and demonstrate that ${x_1'}^0 > {x_1''}^0$ implies ${x_2'}^0 > {x_2''}^0$ (see Fig. \ref{fig:light_cone_construction} for geometrical intuition). 
Consider the difference ${x_2'}^0 - {x_2''}^0$. Combining (i) and (ii) yields:
\be
{x_2'}^0 - {x_2''}^0 = {x_1'}^0 - {x_1''}^0 - |\vx_1' - \vx_2'| + |\vx_1'' - \vx_2''|.
\ee
Now, because of (iii) and ${x_1'}^0 > {x_1''}^0$, this implies:
\be
{x_2'}^0 - {x_2''}^0 \geq |\vx_1' - \vx_1''| - |\vx_1' - \vx_2'| + |\vx_1'' - \vx_2''|.
\ee
Next, we use the triangle inequality twice to obtain:
\be
	|\vx_1' - \vx_2'| ~\geq~ |\vx_1' - \vx_1''| + |\vx_1''-\vx_2'| ~\geq~ |\vx_1' - \vx_1''| + |\vx_1'' - \vx_2''| + |\vx_2'' - \vx_2'|.
\ee
Together with the previous inequality, this gives us:
\be
{x_2'}^0 - {x_2''}^0 ~\geq~ - |\vx_2' - \vx_2''| ~~~
\Leftrightarrow ~~~  {x_2''}^0 - {x_2'}^0 ~\leq~ |\vx_2' - \vx_2''|.
\label{eq:casedifferentiator}
\ee
Note that relation (iv) implies that there are only two cases:
\begin{center}
	(a) ${x_2'}^0 - {x_2''}^0 \geq |\vx_2' - \vx_2''|$ ~~~ or ~~~ (b) ${x_2''}^0 - {x_2'}^0 \geq |\vx_2' - \vx_2''|$.
\end{center}
The crucial point now is that \eqref{eq:casedifferentiator} contradicts (b) while being compatible with (a). As (a) and (b) are mutually exclusive, this implies (a) which in particular establishes ${x_2'}^0 \geq {x_2''}^0$.

The fact that ${x_2'}^0 > {x_2''}^0$ implies ${x_1'}^0 > {x_1''}^0$ follows from the same consideration and the fact that ${x_1''}^0 > {x_1'}^0$ is equivalent to ${x_2''}^0 > {x_2'}^0$ by exchanging $x_k'$ with $x_k''$ for $k=1,2$. \qed
\end{proof}

\paragraph{Remark.}
In special cases, the Feynman propagator $S^F$ can also be used to propagate $\psi^\free$ (see \cite[chap. 6.1]{greiner_qed}), e.g., for propagating positive energy wave functions towards the future. However, it does not vanish outside of the light cone. Since this property of $S^\ret$ is an integral part of the proof, the latter cannot be extended to the case of $G_k = S^F_k$. This suggests that probability conservation does not hold for the Bethe-Salpeter equation, in agreement with the literature.

\subsection{Asymptotic probability conservation for symmetric Green’s fns.} \label{sec:results_sym}

One may wonder if probability conservation can be established for different classes of interaction kernels besides retarded ones. We now prove such a result, albeit a weaker one, for the case of symmetric Green’s functions $G_k = S_k^\sym$ with
\be
S^\sym(x-x’) = \frac{1}{2} \varepsilon(x^0-{x’}^0)S(x-x’)
\label{eq:ssym}
\ee
where $\varepsilon(y) = +1$ if $y \geq 0$ and $\varepsilon(y) = -1$ else.

\begin{proposition} \label{prop:result_sym}
Consider the integral equation \eqref{eq:inteq} with symmetric Green’s functions, $G_k = S_k^\sym,~k=1,2$ \eqref{eq:ssym}, and interaction kernels with the matrix symmetry property \eqref{eq:symmetryk}. Let $\Sigma_t$ be an equal-time surface in any given Lorentz frame. Then for every solution $\psi$ of \eqref{eq:inteq} on $\R^4\times \R^4$, the following statement \emph{(asymptotic probability conservation)} holds true:
\be
\lim_{t\rightarrow -\infty} P(\psi, \Sigma_t) = \lim_{t\rightarrow +\infty} P(\psi, \Sigma_t).
\ee
\end{proposition}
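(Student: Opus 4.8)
The plan is to follow the proof of Proposition~\ref{prop:result_ret} as closely as possible, but to keep track of the dependence on $t$ everywhere and pass to the limits $t\to\pm\infty$ only at the very end. As there, I would split $\psi=\psi^\free+\phi$ into the incoming free solution and the interaction part $\phi$ of \eqref{eq:inteq}, and write $P(\psi,\Sigma_t)=P(\psi^\free,\Sigma_t)+P_1(\psi,\Sigma_t)$, where $P_1$ collects the cross term and the quadratic term in $\phi$. Since $\psi^\free$ solves the free multi-time Dirac equations \eqref{eq:multitimedirac}, it obeys the continuity equations \eqref{eq:continuity}, so $P(\psi^\free,\Sigma_t)$ is independent of $t$ and in particular has equal limits at $t=\pm\infty$. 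The task thus reduces to showing $\lim_{t\to-\infty}P_1(\psi,\Sigma_t)=\lim_{t\to+\infty}P_1(\psi,\Sigma_t)$.

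First I would record the adjoint of \eqref{eq:inteq} for symmetric Green's functions. Using \eqref{eq:symmetryk} together with the self-adjointness relation $\gamma^0[S^\sym(x-x')]^\dagger\gamma^0=S^\sym(x'-x)$ (immediate from $[S(x-x')]^\dagger=-\gamma^0 S(x'-x)\gamma^0$, \eqref{eq:ssym} and $\varepsilon(-y)=-\varepsilon(y)$), the adjoint equation takes the clean form $\overline{\psi}=\overline{\psi}^\free-i\int d^4x_1'\,d^4x_2'\;\overline{\psi}(x_1',x_2')\,K(x_1',x_2')\,S_1^\sym(x_1'-x_1)\,S_2^\sym(x_2'-x_2)$, with no Heaviside factors at all. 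Inserting \eqref{eq:inteq} and this adjoint equation into $P(\psi,\Sigma_t)$ and subtracting $P(\psi^\free,\Sigma_t)$ produces $P_1=P_{1,1}+P_{1,2}$ exactly parallel to \eqref{eq:p1}. The replacement for the key identity \eqref{eq:sret2} is now very simple: on a flat equal-time slice $x_k\in\Sigma_t$ one has $x_k^0=t$, hence $\tfrac12\varepsilon(t-y^0)=\theta_{\Sigma_t}(y)-\tfrac12$, so each $S_k^\sym$ adjacent to the hypersurface integral $\int_{\Sigma_t\times\Sigma_t}d\sigma_\mu(x_1)\,d\sigma_\nu(x_2)$ turns into the propagator $S_k$ multiplied by the factor $\theta_{\Sigma_t}-\tfrac12$, which no longer depends on the surface point. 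That is precisely what is needed to apply the propagation identity \eqref{eq:freeevol} (for $P_{1,1}$) and the composition property \eqref{eq:compositionproperty} (for $P_{1,2}$) just as before. Running the same chain of steps as in Proposition~\ref{prop:result_ret} -- using \eqref{eq:inteq} ``backwards'' to re-express $\psi^\free$ through $\psi$, using that $\overline{\psi}K\psi$ is real (a consequence of \eqref{eq:symmetryk}) to discard one term, and symmetrising with $2\Re z=z+z^*$ -- I expect to arrive at a single integral
\be
P_1(\psi,\Sigma_t)=\int d^4x_1'\,d^4x_2'\,d^4x_1''\,d^4x_2''\;\overline{\psi}(x_1',x_2')\,K(x_1',x_2')\,S_1(x_1'-x_1'')\,S_2(x_2'-x_2'')\,K(x_1'',x_2'')\,\psi(x_1'',x_2'')\;F_t,
\ee
with, writing $q_t(y_1,y_2):=\bigl(\theta_{\Sigma_t}(y_1)-\tfrac12\bigr)\bigl(\theta_{\Sigma_t}(y_2)-\tfrac12\bigr)$,
\be
F_t=\tfrac14\,\varepsilon({x_1'}^0-{x_1''}^0)\,\varepsilon({x_2'}^0-{x_2''}^0)\,\bigl[q_t(x_1',x_2')+q_t(x_1'',x_2'')\bigr]-q_t(x_1',x_2')\,q_t(x_1'',x_2'').
\ee
Here the $\tfrac14\,\varepsilon\varepsilon$ originates from the pair of $S_k^\sym$'s with both arguments free that the backward use of \eqref{eq:inteq} introduces in $P_{1,1}$ -- this is the only place where, in contrast to the retarded case, the $\varepsilon$'s cannot be reduced to $\theta_{\Sigma_t}$'s; the bracket is produced by the two $S_k^\sym$'s adjacent to $\Sigma_t$ in $P_{1,1}$ after symmetrisation; and $q_t(x')\,q_t(x'')$ by the four $S_k^\sym$'s adjacent to $\Sigma_t$ in $P_{1,2}$ (via \eqref{eq:compositionproperty}). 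The precise numerical coefficients are fixed by the same bookkeeping as in the proof of Proposition~\ref{prop:result_ret}.

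The conclusion is then purely algebraic. For any fixed finite configuration, $\theta_{\Sigma_t}(y)\to1$ as $t\to+\infty$ and $\theta_{\Sigma_t}(y)\to0$ as $t\to-\infty$, so $\theta_{\Sigma_t}(y)-\tfrac12\to\pm\tfrac12$; but in $F_t$ this factor only ever enters through $q_t$, i.e.\ to an \emph{even} power, whence $q_t\to(\pm\tfrac12)^2=\tfrac14$ in \emph{both} limits. Therefore $F_t$ has one and the same pointwise limit, $\tfrac18\,\varepsilon({x_1'}^0-{x_1''}^0)\,\varepsilon({x_2'}^0-{x_2''}^0)-\tfrac1{16}$, as $t\to-\infty$ and as $t\to+\infty$. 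Since $|F_t|$ is bounded uniformly in $t$ (each $|\theta_{\Sigma_t}-\tfrac12|\le\tfrac12$ and each $|\varepsilon|=1$), dominated convergence gives that $\lim_{t\to\pm\infty}P_1(\psi,\Sigma_t)$ exists and is the same for both signs, which -- together with the $t$-independence of $P(\psi^\free,\Sigma_t)$ -- completes the proof. It is worth noting that this common limit is generically nonzero, consistently with the fact that the statement is weaker than Proposition~\ref{prop:result_ret}: $P(\psi,\Sigma_t)$ is genuinely not constant in $t$, only its two endpoint limits agree.

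The step I expect to be the main obstacle is analytic rather than structural: the manipulations above (interchanging the $\Sigma_t$-integrals with the $d^4x'\,d^4x''$-integrals, applying \eqref{eq:freeevol} and \eqref{eq:compositionproperty} under the integral sign, and the final dominated-convergence argument) all require that the relevant iterated integrals -- ultimately $\int|\overline{\psi}\,K\,S_1 S_2\,K\,\psi|$ over $\R^{16}$ -- converge absolutely, which is where regularity and decay of $\psi$, of $\psi^\free$ and of the kernel $K$ must either be assumed or imported from the existence theory for \eqref{eq:inteq}. A secondary subtlety to keep in mind is that, unlike in a scattering picture, the restriction of $\psi$ to $\Sigma_t$ need \emph{not} converge to a free solution as $t\to\pm\infty$ (e.g.\ for bound-state-like solutions, which are precisely the ones of interest for Bethe--Salpeter-type equations), so one cannot shortcut the argument by identifying asymptotic in/out wave functions and comparing their norms; it is really the even-degree structure of $F_t$ in the factors $\theta_{\Sigma_t}-\tfrac12$ that forces the two endpoint limits to coincide.
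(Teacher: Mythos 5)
Your proposal is correct and follows essentially the same route as the paper: decompose $P(\psi,\Sigma_t)=P(\psi^\free,\Sigma_t)+P_1(\psi,\Sigma_t)$, reduce $P_1$ to a quadruple integral against a $t$-dependent combination of sign factors, and observe that this combination has the same pointwise limit as $t\to\pm\infty$ because the surface-dependent factors enter only in even products. Indeed, your $F_t$ is identically the paper's $\tfrac{1}{16}\bigl[-\varepsilon(t-{x_1'}^0)\varepsilon(t-{x_2'}^0)\varepsilon(t-{x_1''}^0)\varepsilon(t-{x_2''}^0)+\bigl(\varepsilon(t-{x_1'}^0)\varepsilon(t-{x_2'}^0)+\varepsilon(t-{x_1''}^0)\varepsilon(t-{x_2''}^0)\bigr)\varepsilon({x_1'}^0-{x_1''}^0)\varepsilon({x_2'}^0-{x_2''}^0)\bigr]$ under the substitution $\theta_{\Sigma_t}(y)-\tfrac12=\tfrac12\varepsilon(t-y^0)$, and your common limit $\tfrac18\varepsilon\varepsilon-\tfrac1{16}$ matches the paper's $\tfrac{1}{16}(-1+2\varepsilon\varepsilon)$; your additional remarks on absolute convergence and on the absence of asymptotic freeness are consistent with (and slightly more careful than) the paper's treatment.
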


\begin{proof}
We proceed similarly to the retarded case. The main difference is that the Heaviside functions $\theta$ get replaced by $\frac{1}{2} \varepsilon$. As we focus on equal-time surfaces $\Sigma_t$, the propagation identities \eqref{eq:freeevol} and \eqref{eq:compositionproperty} can be used directly. One obtains:
\begin{align}
&P(\psi,\Sigma_t) = P(\psi^\free,\Sigma_t) + P_1(\psi,\Sigma_t),~~~\text{with}
\label{eq:psym}\\
&P_1(\psi,\Sigma_t) =\frac{1}{16} \int d^4 x_1' \, d^4 x_2'\, d^4 x_1'' \, d^4 x_2''~ \overline{\psi}(x_1',x_2') K(x_1',x_2') S_1(x_1'-x_1'') S_2(x_2'-x_2'')\nonumber\\
&~~~\times K(x_1'',x_2'') \psi(x_1'',x_2'')\left[ - \varepsilon(t-{x_1'}^0) \varepsilon(t-{x_2'}^0)\varepsilon(t-{x_1''}^0) \varepsilon(t-{x_2''}^0)\right.\nonumber\\
&~~~\left. + \left( \varepsilon(t-{x_1'}^0) \varepsilon(t-{x_2'}^0) + \varepsilon(t-{x_1''}^0) \varepsilon(t-{x_2''}^0) \right)\varepsilon({x_1'}^0-{x_1''}^0) \varepsilon({x_2'}^0-{x_2''}^0) \right]
\label{eq:p1sym}
\end{align}
Only the square bracket depends on $t$ -- and is not constant in $t$, as can be seen by case differentiation. However, taking $t\rightarrow \pm \infty$ both leads to the same result
\begin{align}
	&\lim_{t \rightarrow \pm \infty}[...] = -1 + 2\varepsilon({x_1'}^0-{x_1''}^0) \varepsilon({x_2'}^0-{x_2''}^0).\\
	\text{Thus, we find:}~~~ &\lim_{t\rightarrow -\infty} P_1(\psi,\Sigma_t) = \lim_{t\rightarrow +\infty} P_1(\psi,\Sigma_t).
\label{eq:p1symlimit}
\end{align}
Interestingly, the term $P_1$ is non-zero here, in contrast to the retarded case.\footnote{In the symmetric case $\psi^\free$ does not need to agree with $\psi$ in the infinite past nor in the infinite future \cite{int_eq_curved}.} Nevertheless, together with probability conservation for $\psi^\free$, \eqref{eq:p1symlimit} allows us to deduce:
\begin{align}
\lim_{t\rightarrow -\infty}P(\psi,\Sigma_t) = P(\psi^\free,-\infty) + P_1(\psi,-\infty)= P(\psi^\free,+\infty) + P_1(\psi,+\infty) = \lim_{t\rightarrow +\infty}P(\psi,\Sigma_t)P(\psi,\Sigma_t).
\end{align}
\qed
\end{proof}

\vspace{-1cm}

\subsection{Implications for local conservation laws} \label{sec:results_local}

\begin{figure}
\centering
\includegraphics[width=0.3\textwidth]{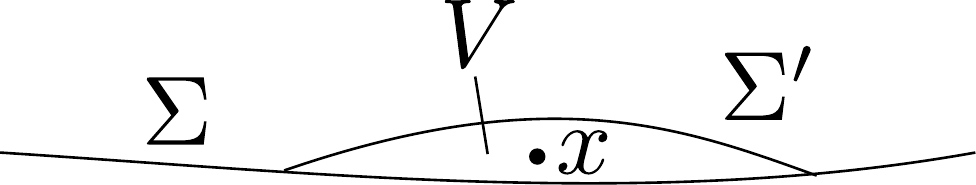}
\caption{Geometric construction for proving that global probability conservation on all Cauchy surfaces implies local probability conservation in the single-particle case.}
\label{fig:bubbleconstruction}
\end{figure}

It is a well-known fact that in the single-particle case global probability conservation on all Cauchy surfaces, $\int_{\Sigma} d\sigma_\mu(x) \, j^\mu(x) = \int_{\Sigma'} d\sigma_\mu(x) \, j^\mu(x)~\forall \, \Sigma, \Sigma'$, implies local probability conservation. This can be shown by constructing a small volume $V$ around a space-time point $x$ which is enclosed between two otherwise overlapping Cauchy surfaces $\Sigma, \Sigma'$ (see Fig. \ref{fig:bubbleconstruction}) and then using the divergence theorem. As the enclosing volume $V$ can be made arbitrarily small, it follows that $\partial_{x^\mu} \, j^\mu(x) = 0$.

Prop. \ref{prop:result_ret} establishes global probability conservation on all Cartesian products $\Sigma_1 \times \Sigma_2$ of Cauchy surfaces, enabling a similar reasoning. Applying the argument to $x_k$ and $\Sigma_k$ yields
\be
\int_{\Sigma_{3-k}}\!\!\! d\sigma_{\mu_{3-k}}(x_{3-k}) \, \partial_{x_k^{\mu_k}} \left[\overline{\psi}(x_1,x_2)\gamma_1^{\mu_1} \gamma_2^{\mu_2} \psi(x_1,x_2) \right] = 0~~\forall \Sigma_{3-k},~~k=1,2.
\label{eq:bubbleargument1}
\ee
Now, we apply the argument another time for $x_{3-k}$ and $\Sigma_{3-k}$ and obtain the result:

\begin{proposition} \label{prop:result_local}
Let $\psi$ be a solution of the integral equation \eqref{eq:inteq} with $G_k = G_k^\ret,~k=1,2$ and $K = K^\ret$ \eqref{eq:retardedint}. Then \eqref{eq:bubbleargument1} is satisfied. In addition, we have:
\be
\partial_{x_1^\mu} \partial_{x_2^\nu} \, \overline{\psi}(x_1,x_2) \gamma_1^\mu \gamma_2^\nu \psi(x_1,x_2) = 0.
\label{eq:localproperty}
\ee
\end{proposition}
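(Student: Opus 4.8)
The plan is to promote the single-particle ``bubble'' argument recalled before the statement to a self-contained auxiliary fact and then apply it twice, once in each of the two variables. First I would isolate the elementary consequence of the divergence theorem on which it rests: if $V^\mu(x)$ is a sufficiently regular vector field on $\R^4$ such that $\int_\Sigma d\sigma_\mu(x)\, V^\mu(x)$ takes the same value for every Cauchy surface $\Sigma$, then $\partial_{x^\mu} V^\mu(x) = 0$ everywhere. This is precisely the construction of Fig.~\ref{fig:bubbleconstruction}: given $x_0 \in \R^4$, one picks two Cauchy surfaces $\Sigma, \Sigma'$ that coincide outside an arbitrarily small neighbourhood $U$ of $x_0$ and enclose a thin lens-shaped region $V \subset U$ between them; then $0 = \int_\Sigma d\sigma_\mu V^\mu - \int_{\Sigma'} d\sigma_\mu V^\mu = \int_{\partial V} d\sigma_\mu V^\mu = \int_V d^4x\, \partial_{x^\mu} V^\mu$, and letting $U$ shrink to $x_0$ gives $\partial_{x^\mu} V^\mu(x_0) = 0$ by continuity.

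For the first conclusion \eqref{eq:bubbleargument1}, I would fix $k\in\{1,2\}$ and a Cauchy surface $\Sigma_{3-k}$ and introduce the auxiliary current $J_k^{\mu_k}(x_k) := \int_{\Sigma_{3-k}} d\sigma_{\mu_{3-k}}(x_{3-k})\, \overline{\psi}(x_1,x_2)\gamma_1^{\mu_1}\gamma_2^{\mu_2}\psi(x_1,x_2)$. By Fubini, $\int_{\Sigma_k} d\sigma_{\mu_k}(x_k)\, J_k^{\mu_k}(x_k) = P(\psi,\Sigma_1,\Sigma_2)$, which by Prop.~\ref{prop:result_ret} does not depend on $\Sigma_k$ once $\Sigma_{3-k}$ is held fixed. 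The auxiliary fact applied in the variable $x_k$ then yields $\partial_{x_k^{\mu_k}} J_k^{\mu_k} = 0$, and pulling the derivative inside the hypersurface integral (licit under the regularity and decay assumed of $\psi$) is exactly \eqref{eq:bubbleargument1}.

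For the second conclusion \eqref{eq:localproperty}, I would take $k=1$ in \eqref{eq:bubbleargument1}, so that for every Cauchy surface $\Sigma_2$ the integral $\int_{\Sigma_2} d\sigma_{\mu_2}(x_2)\, \partial_{x_1^{\mu_1}}[\overline{\psi}\gamma_1^{\mu_1}\gamma_2^{\mu_2}\psi]$ vanishes and hence, in particular, is independent of $\Sigma_2$. Applying the auxiliary fact a second time, now in $x_2$ and to the vector field $g^{\mu_2}(x_2) := \partial_{x_1^{\mu_1}}[\overline{\psi}(x_1,x_2)\gamma_1^{\mu_1}\gamma_2^{\mu_2}\psi(x_1,x_2)]$ with $x_1$ frozen, gives $\partial_{x_2^{\mu_2}} g^{\mu_2} = 0$, which after commuting the two partial derivatives is \eqref{eq:localproperty}. (Starting instead from $k=2$ and then differentiating in $x_1$ gives the same identity, so nothing new is gained.)

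The chief difficulty is analytic rather than conceptual: one must check that the solution $\psi$ --- built from $\psi^\free$ by convolution with the retarded Green's functions of the Dirac equation and the distributional light-cone kernel \eqref{eq:retardedint} --- is regular enough for the divergence theorem, for differentiating under the $\int_\Sigma$ sign in the first step, and for interchanging $\partial_{x_1}$ and $\partial_{x_2}$ in the second; this is where the function-space framework of \cite{int_eq_dirac,int_eq_singular} has to be invoked. It is also worth stressing --- as a limitation, not an obstruction --- that knowing $\int_\Sigma d\sigma_\mu V^\mu = 0$ for all $\Sigma$ forces only $\partial_\mu V^\mu = 0$ and not $V^\mu \equiv 0$ (e.g.\ $V^\mu = \partial_\nu F^{\mu\nu}$ with $F$ antisymmetric of compact support), so this route yields the single mixed identity \eqref{eq:localproperty} rather than the full system of continuity equations \eqref{eq:continuity}.
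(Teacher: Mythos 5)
Your proposal is correct and follows essentially the same route as the paper: the paper likewise obtains \eqref{eq:bubbleargument1} by applying the single-particle ``bubble''/divergence-theorem argument in $x_k$ with $\Sigma_{3-k}$ fixed (using Prop.~\ref{prop:result_ret}), and then obtains \eqref{eq:localproperty} by applying the same argument a second time in the remaining variable. Your explicit isolation of the auxiliary lemma, the regularity caveats, and the closing remark that this yields only the mixed identity rather than the full continuity equations \eqref{eq:continuity} all match the paper's (much terser) presentation and its Remark~2.
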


\vspace{-0.6cm}
\paragraph{Remarks.}
\begin{enumerate}
\item Eq. \eqref{eq:bubbleargument1} for $k=1,2$ implies global probability conservation on all space-like Cauchy surfaces (as can be seen using the divergence theorem), i.e., both notions are equivalent. Eq. \eqref{eq:localproperty}, however, is weaker. Its physical meaning is not clear.
\item \eqref{eq:bubbleargument1} for $k=1,2$ \emph{does not imply local probability conservation}. The reason is that even though \eqref{eq:bubbleargument1} holds for all $\Sigma_2$, one cannot conclude that the integrand vanishes, as it might take negative values. This leads to the conclusion that \emph{global probability conservation on all Cartesian products $\Sigma \times \Sigma$ is not equivalent to local probability conservation.}
\item To make it even clearer that local probability conservation does not hold, we calculate the four-divergence of the tensor current with respect to $x_1$:
\begin{align}
\partial_{x_1^\mu}\left[\overline{\psi}\gamma_1^\mu \gamma_2^\nu \psi\right]\! (x_1,x_2) &= -2 \Re \Big[ \overline{\psi}(x_1,x_2) \gamma_2^\nu \int  \! d^4 x_2'~S_2^\ret(x_2-x_2') K^\ret(x_1,x_2')\psi(x_1,x_2')\Big].
\label{eq:dx1mujmunu}
\end{align}
Despite the spatio-temporal restrictions which $S_2^\ret$ and $K_2^\ret$ imply and the fact that it might be sufficient to restrict the derivation to space-like configurations $(x_1,x_2) \in \mathscr{S}$, the right hand side of \eqref{eq:dx1mujmunu} does not vanish in general.
\item Concerning the physical meaning of Eqs. \eqref{eq:bubbleargument1} for $k=1,2$, one can rewrite them as:
\begin{align}
	&\partial_{k,\mu} j_k^\mu(x_k,\Sigma) = 0~~\forall \, \Sigma,~k=1,2\\
\text{where}~~~&j_k^\mu(x_k,\Sigma) = \int_{\Sigma} d\sigma_\nu(x_{3-k})~\overline{\psi}(x_1,x_2) \gamma_1^\mu \gamma_2^\nu \psi(x_1,x_2),~k=1,2
\label{eq:singleparticlecurrents}
\end{align}
One could try to use these currents to naively calculate the probability for the position of one particle, disregarding the position of the other. Viz, one might guess that the probability for particle $k$ to be found in a small volume $d\sigma(x_k)$ around $x_k \in \Sigma$ is given by
\be
\mathbb{P}(x_k \in d\sigma(x_k)) = j_k^\mu(x_k,\Sigma) n_\mu(x_k) d\sigma(x_k).
\ee
where $n_\mu(x)$ is the future-directed unit normal vector field at $x \in \Sigma$.
\end{enumerate}

\section{Conclusion} \label{sec:conclusion}

Here it was shown that global probability conservation on all Cauchy surfaces holds for certain classes of multi-time integral equations.  To make progress, it was necessary to deviate from the conventional wisdom about relativistic quantum-mechanical integral equations given by the theory about the Bethe-Salpeter equation. The strongest result was obtained for retarded Green's functions $G_k=S_k^\ret,~k=1,2$ and retarded interaction kernels $K=K^\ret$ \eqref{eq:retardedint}.
While retarded interactions are common in classical electrodynamics, a word of caution seems in order here. Using $K^\ret(x_1,x_2)$ implies that $x_1^0 > x_2^0$ has to hold for a configuration $(x_1,x_2)$ to contribute to the interaction term. This seems unnatural as it breaks the symmetry between the particle labels. Related to this, for $S_k^\ret$, $K^\ret$ and on space-like configurations $(x_1,x_2)\in \mathscr{S}$ one has 
\begin{align}
(-i\gamma_2^\nu \partial_{x_2^\nu}+m_2) \psi(x_1,x_2) &= i\int d^4 x_1' \, S_1^\ret(x_1-x_1')K^\ret(x_1',x_2)\psi(x_1',x_2) = 0
\label{eq:d2psiretspacelike}
\end{align}
and therefore $\partial_{x_2^\nu} \, \overline{\psi}(x_1,x_2)\gamma_1^\mu \gamma_2^\nu \psi(x_1,x_2) = 0~~~\forall \, (x_1,x_2)\in \mathscr{S}$. 
Eq. \eqref{eq:d2psiretspacelike} can be shown as follows. Consider the conditions for a term $\psi(x_1',x_2)$ to contribute to the integral. On the one hand, $K^\ret(x_1',x_2) \neq 0$ implies that $x_2$ lies on the past light cone of $x_1'$. On the other hand, $S_2^\ret(x_1 - x_1') \neq 0$ implies that $x_1$ lies in the future of $x_1'$. This is, however, incompatible with $(x_1,x_2) \in \mathscr{S}$.

Thus, on $\mathscr{S}$, one can take \eqref{eq:d2psiretspacelike} to express that particle 2 is moving freely (while particle 1 is not) and interpret the interaction term in \eqref{eq:inteq} as a single-sided action of particle 2 on particle 1.\footnote{On configurations $(\R^4\times \R^4) \setminus \mathscr{S}$, however, which the probability integral on $\Sigma_1\times \Sigma_2$ for $\Sigma_1 \neq \Sigma_2$ uses, one in general has $(-i\gamma_2^\nu \partial_{x_2^\nu}+m_2) \psi(x_1,x_2) \neq 0$.} This strengthens the concern that this type of interaction is physically not natural.
In my opinion, the resulting dynamics represents a toy example and a first step towards a more natural result in the future, e.g., for $K^\sym$ \eqref{eq:lightconeint}.

In view of this situation, the second result, asymptotic probability conservation for the integral equation with symmetric Green’s functions, $G_k = S_k^\sym,~k=1,2$ seems particularly important. While weaker than probability conservation on all Cauchy surfaces, it holds for \emph{arbitrary interaction kernels} respecting the basic symmetry property \eqref{eq:symmetryk}.

Moreover, it was shown that global probability conservation on all Cartesian products $\Sigma_1\times \Sigma_2$ of Cauchy surfaces is equivalent to the semi-local property \eqref{eq:bubbleargument1}. Local probability conservation, however, does seem not hold.\footnote{Eq. \eqref{eq:d2psiretspacelike} implies $\partial_{x_2^\nu} \overline{\psi}(x_1,x_2) \gamma_1^\mu \gamma_2^\nu \psi(x_1,x_2) = 0$ only on space-like configurations $(x_1,x_2) \in \mathscr{S}$ while $\partial_{x_1^\mu} \overline{\psi}(x_1,x_2) \gamma_1^\mu \gamma_2^\nu \psi(x_1,x_2) \neq 0$  in general on $\R^4 \times \R^4$.} While perfectly logical, this fact may seem surprising since in the single-particle case global probability conservation on all Cauchy surfaces $\Sigma$ is equivalent to local probability conservation. Thus, we found that, \emph{in the $N$-particle case, local probability conservation is stronger than global probability conservation on all (Cartesian products of) $N$ Cauchy surfaces}.

In the future, it would be interesting to investigate if physical meaning can be given to the semi-local conservation law \eqref{eq:bubbleargument1}. For example, one may wonder if these properties are helpful to construct relativistic Bohmian laws of motion along the lines of \cite{hbd} or \cite{opposite_arrows}, or to any other theory in the foundations of quantum mechanics which avoids the measurement problem. That would be in Detlef D\"urr’s spirit.

During the time when he was my PhD adviser, Detlef D\"urr expressed that he thought further progress in the foundations of relativistic quantum theory required a clear and simple, mathematically solid underlying equation (free of the divergences that plague quantum field theory), be it only for a toy example. Once found, one could hope that such an equation would provide further guidelines for constructing a relativistic law of motion for Bohmian particles, as the Schr\"odinger and Dirac equations do in non-relativistic QM and relativistic single-particle QM, respectively. Perhaps, the integral equation discussed here can provide a starting point for such a consideration.

Overall, what has been achieved? I would say, a new way of constructing an interacting, relativistic equation which is now demonstrably compatible with global probability conservation. Do the examples provided constitute a full theory of relativistic quantum physics? Certainly not. Alas, one may conclude with a twist on the words of Feynman from the introduction:

\begin{quote}
We now have a satisfactory relativistic quantum-mechanical model, one that doesn’t agree with nature, but, at least, agrees with the logic that the sum of probability of all alternatives has to be 100\%.
\end{quote}

\paragraph{Disclaimer.}
This article reflects my personal work and does not represent scientific standpoints of my employer (Marvel Fusion GmbH).

\paragraph{Acknowledgments.}
I would like to thank Siddhant Das and Roderich Tumulka for helpful discussions as well as Sascha Lill and an anonymous referee for constructive comments on the manuscript. Special thanks go to Markus N\"oth for valuable remarks and a careful reading of an earlier version of the argument.


\end{document}